\begin{document}

\mainmatter  % start of an individual contribution

% first the title is needed
\title{A Space-efficient Parametrized Algorithm for the Hamiltonian Cycle Problem by Dynamic Algebraziation}

% a short form should be given in case it is too long for the running head
\titlerunning{A Space-efficient Parametrized Algorithm for Hamiltonian Cycle by Dynamic Algebraization}

% the name(s) of the author(s) follow(s) next
%
% NB: Chinese authors should write their first names(s) in front of
% their surnames. This ensures that the names appear correctly in
% the running heads and the author index.
%
\author{Mahdi Belbasi, Martin F\"urer}
%\thanks{Research supported in part by NSF Grant CCF-1320814.}}
%
\authorrunning{M. Belbasi, M. F\"urer}
%%\authorrunning{Lecture Notes in Computer Science: Authors' Instructions}
% (feature abused for this document to repeat the title also on left hand pages)

% the affiliations are given next; don't give your e-mail address
% unless you accept that it will be published
\institute{Department of Computer Science and Engineering \\
	Pennsylvania State University \\
	University Park, PA 16802,  USA \\
	\{mahdi,furer\}@cse.psu.edu 
%\url{http://www.cse.psu.edu/~furer}
}

%
% NB: a more complex sample for affiliations and the mapping to the
% corresponding authors can be found in the file "llncs.dem"
% (search for the string "\mainmatter" where a contribution starts).
% "llncs.dem" accompanies the document class "llncs.cls".
%

%%\toctitle{Lecture Notes in Computer Science}
%%\tocauthor{Authors' Instructions}
\maketitle

\begin{abstract}
An NP-hard graph problem may be intractable for general graphs but it could be efficiently solvable using dynamic programming for graphs with bounded width (or depth or some other structural parameter). Dynamic programming is a well-known approach used for finding exact solutions for NP-hard graph problems based on tree decompositions. It has been shown that there exist algorithms using linear time in the number of vertices and single exponential time in the width (depth or other parameters) of a given tree decomposition for many  connectivity problems. Employing dynamic programming on a tree decomposition usually uses exponential space. In 2010, Lokshtanov and Nederlof introduced an elegant framework to avoid exponential space by algebraization. Later, F\"urer and Yu modified the framework in a way that even works when the underlying set is dynamic, thus applying it to tree decompositions.\\
In this work, we design space-efficient algorithms to solve the Hamiltonian Cycle and the Traveling Salesman problems, using polynomial space while the time complexity is only slightly increased. This might be inevitable since we are reducing the space usage from an exponential amount (in dynamic programming solution) to polynomial. We give an algorithm to solve Hamiltonian cycle in time $\mathcal{O}((4w)^d\, nM(n\log{n}))$ using $\mathcal{O}(dn\log{n})$ space, where $M(r)$ is the time complexity to multiply two integers, each of which being represented by at most $r$ bits. Then, we solve the more general Traveling Salesman problem in time $\mathcal{O}((4w)^d poly(n))$ using space $\mathcal{O}(\mathcal{W}dn\log{n})$, where $w$ and $d$ are the width and the depth of the given tree decomposition and $\mathcal{W}$ is the sum of weights. Furthermore, this algorithm counts the number of Hamiltonian Cycles.
 \end{abstract}

\section{Introduction}
Dynamic programming (DP) is largely used to avoid recomputing  subproblems. It  may decrease the time complexity, but it uses auxiliary space to store the intermediate values. This auxiliary space may go up to exponential in the size of the input. This means both the running time and the space complexity are exponential for some algorithms solving those NP-complete problems. Space complexity is a crucial aspect of algorithm design, because we typically run out of space before running out of time. To fix this issue, Lokshtanov and Nederlof \cite{lokshtanov2010saving} introduced a framework which works on a static underlying set. The problems they considered were Subset Sum, Knapsack, Traveling Salesman (in time $\mathcal{O}(2^nw)$ using polynomial space),Weighted Steiner Tree, and Weighted Set Cover. They use DFTs, zeta transforms and M\"obius transforms \cite{rota1964foundations, stanley1997enumerative}, taking advantage of the fact that working on zeta (or discrete Fourier) transformed values is significantly easier since the subset convolution operation converts to pointwise multiplication operation. In all their settings, the input is a set or a  graph which means the underlying set for the subproblems is static. F\"urer and Yu \cite{furer2017space} changed this approach modifying a dynamic programming algorithm applied to a tree decomposition (instead of the graph itself). The resulting algorithm uses only polynomial space and the running time does not increase drastically. By working with tree decompositions, they obtain parametrized algorithms which are exponential in the tree-depth and linear in the number of vertices. If the tree decomposition has a bounded width, then the algorithm is both fast and space-efficient. In this setting, the underlying set is not static anymore, because they are working with different bags of nodes. They show that using algebraization helps to save space even if the underlying set is dynamic. They consider perfect matchings in their paper. In recent years, there have been several results in this field where algebraic tools are used to save space when DP algorithms are applied to NP-hard problems. In 2018, Pilipczuk and Wrochna \cite{pilipczuk2018space} applied a similar approach to solve the Minimum Dominating Set problem. Although they have not directly used these algebraic tools but it is a similar approach (in time $\mathcal{O}(3^dpoly(n))$ using $poly(n)$ space).\\
We have to mention that there is no general method to automatically transform dynamic programming solutions to polynomial space solutions while increasing the tree-width parameter to the tree-depth in the running time. \\
One of the interesting NP-hard problems in graph theory is Hamiltonian Cycle. It seems harder than many other graph problems. We are given a graph and we want to find a cycle visiting each vertex exactly once. The naive  deterministic algorithm for the Hamiltonian Cycle problem and the more general the Traveling Salesman problem runs in time $\mathcal{O}(n!)$ using polynomial space. Later, deterministic DP and inclusion-exclusion algorithms for these two problems running in time $\mathcal{O}^*(2^n)$\footnote{$\mathcal{O}^*$ notation hides the polynomial factors of the expression.} using exponential space were given in \cite{karp1982dynamic, kohn1977generating, bellman1962dynamic}. The existence of a deterministic algorithm for Hamiltonian cycle running in time $\mathcal{O}((2-\epsilon)^n)$, for a fixed $\epsilon > 0$ is still an open problem. There are some randomized algorithms which run in time $\mathcal{O}((2-\epsilon)^n)$, for a fixed $\epsilon > 0$ like the one given in \cite{bjorklund2017directed}. Although, there is no improvement in deterministic running time, there are some results on parametrized algorithms.  In 2011, Cygan et al.\ \cite{cygan2011solving} designed a parametrized algorithm for the Hamiltonian Cycle problem, which runs in time $4^{tw}|V|^{\mathcal{O}(1)}$. They also presented a randomized algorithm for planar graphs running in time $\mathcal{O}(2^{6.366\sqrt{n}})$. In 2015, Bodlaender et al.\ \cite{bodlaender2015deterministic} introduced two deterministic single exponential time algorithm for Hamiltonian Cycle: One based on pathwidth running in time $\tilde{\mathcal{O}}(6^{pw}pw^{\mathcal{O}(1)}n^2)$\footnote{$\tilde{\mathcal{O}}$ notation hides the logarithmic factors of the expression.} and the other is based on treewidth running in time $\tilde{\mathcal{O}}(15^{tw}tw^{\mathcal{O}(1)}n^2)$, where $pw$ and $tw$ are the pathwidth and the treewidth respectively. The authors also solve the Traveling Salesman problem in time $\mathcal{O}(n(2+2^\omega)^{pw}pw^{\mathcal{O}(1)})$ if a path decomposition of width $pw$ of $G$ is given, and in time $\mathcal{O}(n(7+2^{\omega+1})^{tw}tw^{\mathcal{O}(1)})$, where $\omega$ denotes the matrix multiplication exponent. One of the best known upper bound for $\omega$ \cite{Williams:2012:MMF:2213977.2214056} is $2.3727$. They do not consider the space complexity of their algorithm and as far as we checked it uses exponential space.\\
Recently, Curticapean et al.\ \cite{curticapean2018tight} showed that there is no positive $\epsilon$ such that the problem of counting the number of Hamiltonian cycles can be solved in $\mathcal{O}^*((6-\epsilon)^{pw})$ time assuming SETH. Here $pw$ is the width of the given path decomposition of the graph. They show this tight lower bound via matrix rank.
\section{Preliminaries}
In this section we review notations that we use later.
%\begin{definition}\textbf{Hamiltonian Path.} A Hamiltonian path in an undirected (directed) graph is an undirected (directed) path that visits all of the vertices of the graph exactly once.
%\end{definition}
%\begin{definition}\textbf{Hamiltonian Cycle.} A Hamiltonian cycle is a Hamiltonian Path which is a cycle (visits the start vertex twice).
%\end{definition}

\subsection{Tree Decomposition}
A \emph{tree decomposition} of a graph $G = (V, E)$, of $G$ is a tree $\mathcal{T} = (V_{\mathcal{T}}, E_{\mathcal{T}})$ such that each node $x$ in $V_{\mathcal{T}}$ is associated with a set $B_x$ (called the bag of $x$) of vertices in $G$, and $\mathcal{T}$ has the following properties:
\begin{itemize}
    \item The union of all bags is equal to $V$. In other words, for each $v\in V$, there exists at least one node $x\in V_{\mathcal{T}}$ with $B_x$ containing $v$.
    
    \item For every edge $\{u, v\} \in E$, there exists a node $x$ such that $u,v\in B_x.$ 
    
    \item For any nodes $x, y \in V_{\mathcal{T}}$, and any node $z\in V_{\mathcal{T}}$ belonging to the path connecting $x$ and $y$ in $\mathcal{T}$, $B_x \cap B_y \subseteq B_z.$
\end{itemize}

The \emph{width of a tree decomposition} is the size of its largest bag minus one. The \textit{treewidth} of a graph $G$ is the minimum width over all tree decompositions of $G$ called $tw(G)$. In the following, we use the letter $k$ for the treewidth. Arnborg et al.\ \cite{arnborg1988problems} showed that constructing a tree decomposition with the minimum treewidth is an NP-hard problem but there are some approximation algorithms for finding  near-optimal tree decompositions \cite{bodlaender1991approximating, bouchitte2004treewidth, bodlaender1996linear}. In 1996, Bodlaender \cite{bodlaender1996linear} introduced a linear time algorithm to find the minimum treewidth if the treewidth is bounded by a constant.\\
To simplify many application algorithms, the notion of a \emph{nice tree decomposition} has been introduced which has the following properties. The tree is rooted and every node in a nice tree decomposition has at most two children. Any node $x$ in a nice tree decomposition $\tau$ is of one of the following types (let $c$ be the only child of $x$ or let $c_1$ and $c_2$ be the two children of $x$):
\begin{itemize}
\item \emph{Leaf} node, a leaf of $\tau$ without any children.
\item \emph{Forget} node (forgetting vertex $v$), where $v \in B_c$ and $B_x = B_c \setminus \{v\}$,
\item \emph{Introduce vertex} node (introducing vertex $v$), where $v \notin B_c$ and $B_x = B_c \cup \{v\}$ ,
%\item \emph{Introduce edge} node (introducing edge $e = \{u, v\})$, where $u, v \in B_x$ and $B_x = B_c$. $e$ is associated with $x$.
\item \emph{Join} node, where $x$ has two children with the same bag as $x$, i.e. $B_x = B_{c_1} = B_{c_2}$.
\end{itemize}
We should mention that in some papers like \cite{furer2017space}, for the sake of simplicity an introduce edge node is defined which is not a part of the standard definition of the nice tree decomposition. Here introduce edge nodes are not needed and we can handle the problem easier without such nodes. In fact, we add edges to the bags as soon as the endpoints are introduced.\\
It has been shown that any given tree decomposition can be converted to a nice tree decomposition with the same treewidth in polynomial time \cite{kneis2009bound}.
%\begin{definition}
%We define $\tau_x = (V_x, E_x)$ to be the subtree rooted at $x \in \tau$ and its edges are edges introduced through the subtree rooted at $x$, where $V_x$ is the union of all bags through the $\tau_x$ and $E_x$ is the set of all edges introduced through $\tau_x$.
%\end{definition}

\begin{definition}
The \emph{depth} of a tree decomposition is the maximum number of distinct vertices in the union of all bags on a path from the root to the leaf. We use $d$ to denote the depth of a given tree decomposition.
\end{definition}
We have to mention that this is different from the depth of a tree. This is the depth of a tree decompisition as defined above.
\begin{definition}
The \emph{tree-depth} of a graph $G$, is the minimum over the depths of  all tree decompositions of $G$.  We use $td$ to denote the tree-depth of a graph.
\end{definition}

After defining the treewidth and the tree-depth, now we have to talk about the relationship between these parameters in a given graph $G$:

\begin{lemma} (see \cite[Corollary 2.5]{nevsetvril2006tree} and \cite{bodlaender1995approximating})

For any connected graph $G$, $td(G) \geq tw(G)+1 \geq \frac{td(G)}{\log_2{|V(G)|}}$.
\end{lemma}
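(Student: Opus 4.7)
The plan is to handle the two inequalities separately, the left one by a direct counting argument and the right one by a balanced-separator recursion.

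For the first inequality $td(G) \geq tw(G)+1$, I would take any tree decomposition $\mathcal{T}$ of $G$ of depth $d(\mathcal{T})$ and show that every bag has size at most $d(\mathcal{T})$. Fix an arbitrary node $x \in V_{\mathcal{T}}$, extend the root-to-$x$ path to some root-to-leaf path $P$, and observe that all vertices of $B_x$ lie in the union of bags along $P$; by definition this union has at most $d(\mathcal{T})$ distinct vertices, so $|B_x|\le d(\mathcal{T})$. Minimizing over all tree decompositions of $G$ yields $\mathrm{tw}(G)+1 \le td(G)$.

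For the second inequality, I would prove the contrapositive shape $td(G) \le (\mathrm{tw}(G)+1)\log_2 |V(G)|$ by constructing from any width-$k$ tree decomposition a tree decomposition of depth at most $(k+1)\log_2 n$. The engine is the standard fact that a width-$k$ tree decomposition admits a bag $B_v$ of size at most $k+1$ whose removal partitions $V(G)\setminus B_v$ into parts of size at most $n/2$; this is obtained by taking a centroid node of the decomposition tree (viewed with edge weights counting the vertices introduced in each subtree, using the forget nodes so that every vertex is counted once) and invoking the separator property of tree decompositions. Given such a $B_v$, I would build a new decomposition whose root is a chain of $|B_v|$ nodes all carrying $B_v$, and then attach recursively constructed decompositions of the induced subgraphs on each part $V_i$ (adding $B_v\cap N[V_i]$ into their root bags to maintain the edge and connectivity properties); the resulting tree decomposition has depth $D(n)$ satisfying $D(n)\le (k+1)+D(\lfloor n/2\rfloor)$, so $D(n)\le (k+1)\log_2 n$.

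I expect the main obstacle to be the second part, specifically the clean construction of the balanced separator bag and the bookkeeping needed to verify that the recursive assembly really is a valid tree decomposition (covering all edges, respecting connectedness of each vertex's occurrences). The centroid argument is standard but must be executed on the decomposition tree with the right weighting so that each vertex contributes exactly once; and in the assembly one has to be careful that the separator bag, placed as a chain at the top, is unioned into the root bags of the recursive decompositions so that edges from $B_v$ into each piece are still covered by some common bag. Once these verifications are in place, chaining the recurrence gives the claimed logarithmic blow-up and completes the proof.
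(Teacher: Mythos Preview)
The paper does not actually prove this lemma; it states it with citations to Ne\v{s}et\v{r}il--Ossona de Mendez and to Bodlaender et al.\ and moves on. Your proposal is essentially a reconstruction of the standard arguments behind those references, so there is no paper-proof to compare against beyond saying that you are following the intended route.

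Your argument for the first inequality is correct and is exactly the one-line observation that the paper's definition of depth is designed to make immediate.

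For the second inequality your plan (balanced separator bag from a width-$k$ decomposition, recurse on components of size at most $n/2$, obtain $D(n)\le(k+1)+D(\lfloor n/2\rfloor)$) is the right one, but the assembly step as you wrote it has a gap. Adding $B_v\cap N[V_i]$ only to the \emph{root} bag of the recursive decomposition of $G[V_i]$ does not in general cover an edge $\{u,w\}$ with $u\in B_v$ and $w\in V_i$: that root bag is some separator of $G[V_i]$ and need not contain $w$, so no single bag contains both endpoints. The clean fix is to add all of $B_v$ to \emph{every} bag in the recursive subtree; validity is then immediate, and since the paper's depth counts \emph{distinct} vertices along a root-to-leaf path, the at most $k+1$ vertices of $B_v$ are counted only once regardless of how many bags they sit in, so the recurrence is unaffected. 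Equivalently, one can run the whole argument in the elimination-tree language, arranging $B_v$ as a chain of single vertices and hanging the recursive elimination trees below it; that is the setting in which your ``chain of $|B_v|$ nodes'' is the natural object. With this correction the argument goes through.
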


\textbf{Example}: The path $P_n$ with $n$ vertices, has treewidth $1$ and tree-depth of $\log_2(n+1)$.\\
One should note that finding the treewidth and the tree-depth of a given graph $G$ are known to be an NP-hard problem. A question which arises here, is whether there exists a tree decomposition such that its width is the treewidth and its depth is the tree-depth of the original graph? In other words, is it possible to construct a tree decomposition of a given graph which minimizes both the depth and the width? If the answer is yes, how to gain such a tree decomposition. Although we are not focusing on this question here, but it seems an interesting problem to think about.

\subsection{Algebraic tools to save space}
When we use dynamic programming to solve a graph problem on a tree decomposition, it usually uses exponential space. Lokshtanov and Nederlof converted some algorithms using subset convolution or union product into transformed version in order to reduce the space complexity. Later, F\"urer and Yu \cite{furer2017space} also used this approach in a dynamic setting, based on tree decompositions for the Perfect Matching problem. In this work, we introduce algorithms to solve the Hamiltonian cycle and the Traveling Salesman problems. First, let us recall some definitions.\\
Let $\mathcal{R}[2^{\mathcal{V}}]$ be the set of all functions from the power set of the universe $\mathcal{V}$ to the ring $\mathcal{R}$. The operator $\oplus$ is the pointwise addition and the operator $\odot$ is the pointwise multiplication.

\begin{definition}
A \emph{relaxation} of a function $f \in \mathcal{R}[2^{\mathcal{V}}]$ is a sequence of functions $\{f^i : f^i \in \mathcal{R}[2^{\mathcal{V}}], 0 \leq i \leq |\mathcal{V}|\}$, where $\forall \text{ } 0 \leq i \leq |\mathcal{V}|$ and $\forall X \subseteq \mathcal{V}$, $f^i[X]$ is defined as:
\begin{equation}
f^i[X] = 
\begin{cases}
0 & \text{if } i < |X|,\\
f[X] & \text{if } i = |X|,\\
\text{arbitrary value} & \text{if } i > |X|.
\end{cases}
\end{equation}
\end{definition}

\begin{definition} The \emph{zeta transform} of a function $f \in \mathcal{R}[2^{\mathcal{V}}]$ is defined as:
\begin{equation}\label{eq:zeta_trans}
\zeta f[X] = \sum_{Y \subseteq X} f[Y].
\end{equation}
\end{definition}

\begin{definition} The \emph{M\"obius transform} of a function $f \in \mathcal{R}[2^{\mathcal{V}}]$ is defined as:
\begin{equation}\label{eq:mobius_trans}
\mu f[X] = \sum_{Y \subseteq X} (-1)^{|X \setminus Y|} f[Y].
\end{equation}
\end{definition}

\begin{lemma}\label{lem:zeta_mob} The M\"obius transform is the inversion of the zeta transform and vice versa, i.e. 
\begin{equation}\label{eq:zeta_mob}
\mu (\zeta f[X]) = \zeta (\mu f[X]) = f[X].
\end{equation}
\end{lemma}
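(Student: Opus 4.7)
The plan is to prove both identities by direct computation, expanding the definitions, swapping the order of summation, and then evaluating the resulting inner sum via a binomial identity. I will handle $\mu(\zeta f)[X] = f[X]$ first, since the other direction is symmetric.

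First, I would substitute the definition of $\zeta f$ into the definition of $\mu$:
\begin{equation*}
\mu(\zeta f)[X] = \sum_{Y \subseteq X} (-1)^{|X \setminus Y|} \sum_{Z \subseteq Y} f[Z].
\end{equation*}
Then I would exchange the order of summation over $Y$ and $Z$. The pair $(Y,Z)$ with $Z \subseteq Y \subseteq X$ is enumerated by first choosing $Z \subseteq X$ and then choosing an intermediate $Y$ with $Z \subseteq Y \subseteq X$. This yields
\begin{equation*}
\mu(\zeta f)[X] = \sum_{Z \subseteq X} f[Z] \cdot \Bigl(\sum_{Y:\, Z \subseteq Y \subseteq X} (-1)^{|X \setminus Y|}\Bigr).
\end{equation*}
The goal is then to show the inner parenthesis equals $1$ when $Z = X$ and $0$ otherwise, so that only the term $Z = X$ survives, leaving $f[X]$.

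The key step is evaluating that inner sum. Writing $n = |X \setminus Z|$, the subsets $Y$ with $Z \subseteq Y \subseteq X$ are in bijection with subsets $S \subseteq X \setminus Z$ via $Y = Z \cup S$, and then $|X \setminus Y| = n - |S|$. Grouping by $k = |S|$, the sum becomes $\sum_{k=0}^{n} \binom{n}{k} (-1)^{n-k} = (1-1)^n$, which is $0$ for $n > 0$ and $1$ for $n = 0$, i.e.\ exactly the Kronecker-delta behavior we need. For the reverse direction $\zeta(\mu f)[X] = f[X]$, the same maneuver applies with the same binomial cancellation, but the sign $(-1)^{|Y \setminus Z|}$ now lives on the inner sum and yields $\sum_{k=0}^{n}\binom{n}{k}(-1)^k$, which equals $0$ for $n>0$ and $1$ for $n=0$ just as before.

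There is no real obstacle here; the entire argument rests on the binomial identity $(1-1)^n = 0$ for $n \geq 1$. The only thing I would be careful about is writing the index change cleanly so that it is unambiguous whether the sign is parameterized by $|X \setminus Y|$ or $|Y \setminus Z|$ in each of the two directions, since those are the two flavors of the same cancellation. I expect the whole proof to fit in a short half-page.
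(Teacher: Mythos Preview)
Your argument is correct and is exactly the standard proof of M\"obius inversion on the subset lattice: expand, swap sums, and collapse the inner alternating sum via $(1-1)^n$. There is nothing to fix.

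As for comparison: the paper does not actually prove this lemma at all --- it simply writes ``See \cite{rota1964foundations, stanley1997enumerative} for the proof'' and moves on. So your write-up is strictly more than what the paper supplies; if anything, it could be shortened to a one-line remark that both directions reduce to $\sum_{Z\subseteq Y\subseteq X}(-1)^{|X\setminus Y|}=[Z=X]$ (resp.\ with $(-1)^{|Y\setminus Z|}$), which follows from the binomial theorem.
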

See \cite{rota1964foundations, stanley1997enumerative} for the proof.\\
Instead of storing exponentially many intermediate results, we store the zeta transformed values. We can assume that instead of working on the original nice tree decomposition, we are working on a mirrored nice tree decomposition to which the zeta transform has been applied. We work on zeta transformed values and finally to recover the original values, we use Equation \ref{eq:zeta_mob}. The zeta transform converts the \say{hard} union product operation ($*_{u}$) to the \say{easier} pointwise multiplication operation ($\odot$) which results in saving space.

\begin{definition}
Given $f, g \in \mathcal{R}[2^{\mathcal{V}}]$ and $X \in 2^{\mathcal{V}}$, the \emph{Subset Convolution} of $f$ and $g$ denoted $(f*_{R}g)$ is defined as:
    \begin{equation}
        (f*_{R}g)[X] = \sum_{X_1 \subseteq X}f(X_1)g(X \setminus X_1).
    \end{equation}
\end{definition}

\begin{definition}
Given $f, g \in \mathcal{R}[2^{\mathcal{V}}]$ and $X \in 2^{\mathcal{V}}$, the \emph{Union Product} of $f$ and $g$ denoted $(f*_{u}g)$ is defined as:
    \begin{equation}
        (f*_{u}g)[X] = \sum_{X_1\cup X_2 = X}f(X_1)g(X_2).
    \end{equation}
\end{definition}

\begin{theorem}\label{thm:union_zeta}
\label{thm:pointwise-multiplication}(\cite{bjorklund2007fourier})
    Applying the zeta transform to a union product operation, results in the pointwise multiplication of zeta transforms of the outputs, i.e., given $f, g \in \mathcal{R}[2^{\mathcal{V}}]$ and $X \in 2^{\mathcal{V}}$, 
    \begin{equation}
        \zeta(f*_{u}g)[X] = (\zeta f) \odot (\zeta g) [X].
    \end{equation}
\end{theorem}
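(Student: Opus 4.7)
The plan is to prove the identity by direct manipulation of sums, without needing any inversion machinery. Unfolding both definitions, I would first write
\begin{equation*}
\zeta(f *_{u} g)[X] \;=\; \sum_{Y \subseteq X}\,(f *_{u} g)[Y] \;=\; \sum_{Y \subseteq X}\,\sum_{Y_1 \cup Y_2 = Y} f(Y_1)\, g(Y_2),
\end{equation*}
and then attempt to decouple the two variables $Y_1$ and $Y_2$ by reindexing the double sum.

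The decisive observation is that every ordered pair $(Y_1, Y_2)$ of subsets of $\mathcal{V}$ has a uniquely determined union $Y := Y_1 \cup Y_2$, and moreover $Y_1 \cup Y_2 \subseteq X$ if and only if both $Y_1 \subseteq X$ and $Y_2 \subseteq X$. Hence iterating over $Y \subseteq X$ and then over pairs whose union equals $Y$ is exactly the same as iterating over all pairs $(Y_1, Y_2)$ with $Y_1 \subseteq X$ and $Y_2 \subseteq X$ independently. After this reindexing the sum factors as
\begin{equation*}
\left(\sum_{Y_1 \subseteq X} f(Y_1)\right)\left(\sum_{Y_2 \subseteq X} g(Y_2)\right) \;=\; \zeta f[X]\cdot \zeta g[X],
\end{equation*}
which is precisely $(\zeta f \odot \zeta g)[X]$.

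The only subtlety I anticipate is the reindexing step itself: one must check that each pair $(Y_1,Y_2)$ with union contained in $X$ is counted exactly once, neither lost nor duplicated when we drop the partition of pairs according to their union. This is immediate because the union is a well-defined function of the pair. Note that, in contrast, for the subset convolution $*_R$ a disjointness constraint $Y_1 \cap Y_2 = \emptyset$ would have to be enforced, which is exactly what makes that case harder and why M\"obius inversion or a ranking by $|Y_1|+|Y_2|$ is typically needed there; for the union product no such constraint intervenes, so the factorization goes through directly.
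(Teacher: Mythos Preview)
Your proof is correct: the reindexing step is valid precisely because each ordered pair $(Y_1,Y_2)$ with $Y_1,Y_2\subseteq X$ determines a unique $Y=Y_1\cup Y_2\subseteq X$, so the double sum factors exactly as you claim. Note that the paper does not supply its own proof of this theorem at all --- it merely states the result with a citation to \cite{bjorklund2007fourier} --- and the argument you wrote is the standard elementary one.
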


All of the previous works which used either DFT or the zeta transform on a given tree decomposition (such as \cite{furer2017space}, and \cite{lokshtanov2010saving}), have one common central property that the recursion in the join nodes, can be presented by a formula using a union product operation. The union product and the subset convolution are complicated operations in comparison with pointwise multiplication or pointwise addition. That is why taking the zeta transform of a formula having union products makes the computation easier. As noted earlier in Theorem \ref{thm:union_zeta}, taking the zeta transform of such a term, will result in a term having pointwise multiplication which is much easier to handle than the union product. After doing a computation over the zeta transformed values, we can apply the M\"obius transform on the outcome to get the main result (based on Theorem \ref{lem:zeta_mob}). In other words, the direct computation has a mirror image using the zeta transforms of the intermediate values instead of the original ones. While the direct computation keeps track of exponentially many intermediate values, the computation over the zeta transformed values partitions into exponentially many branches, and they can be executed one after another. Later, we show that this approach improves the space complexity using only polynomial space instead of exponential space.

\section{Counting the number of Hamiltonian cycles}
We are given a connected graph $G = (V, E)$ and a nice tree decomposition $\tau$ of $G$ of width $w$. If $H$ is a Hamiltonian cycle, then the induced subgraph of $H$ on $\tau_x$ (called $H[V_x]$, where $V_x$ is the union of all bags in $\tau_x$) is a set of disjoint paths with endpoints in $B_x$ (see Figure \ref{fig:paths}).
\begin{SCfigure}[0.7][h]
\includegraphics[scale=0.5]{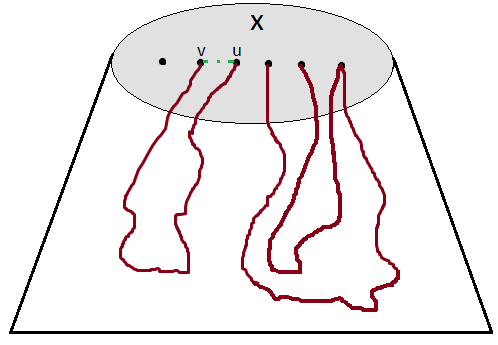}
\caption{$H[V_x]$ is a set of paths with endpoints in $B_x$}
\label{fig:paths}
\end{SCfigure}

\begin{definition}
A \emph{pseudo-edge} is a pair of endpoints of a path of length $\geq 2$ in $H[V_x]$. We use the $\langle , \rangle$ notation for the pseudo-edges. E.g., in Figure \ref{fig:paths}, $p = \langle u, v\rangle$ is a pseudo-edge (it does not imply that there is an edge between $u$ and $v$, it just says that there is a path of length at least two in $H[V_x]$ where $u$ and $v$ are its endpoints). The $\langle , \rangle$ notation is a symmetrical notation since our paths are undirected, i.e., $\langle u, v\rangle = \langle v, u\rangle$. Each path is associated with a pseudo-edge.
\end{definition}

\begin{lemma} \label{lem:pe}
The degree of all vertices in $H[V_x]$ is at most 2.
\end{lemma}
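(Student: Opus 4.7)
The plan is to use the defining property of a Hamiltonian cycle together with the fact that $H[V_x]$ is an induced subgraph (which can only delete edges relative to $H$, never add them).

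First I would recall that by definition a Hamiltonian cycle $H$ visits every vertex of $G$ exactly once and returns to its starting point, so as a $2$-regular spanning subgraph of $G$ every vertex $v\in V$ satisfies $\deg_H(v)=2$. Next, I would unpack the notation $H[V_x]$: it is the subgraph of $H$ induced by the vertex set $V_x\subseteq V$, meaning it retains precisely those edges $\{u,v\}\in E(H)$ with both $u\in V_x$ and $v\in V_x$.

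The key step is then a direct comparison: for any vertex $v\in V_x$, the edges incident to $v$ in $H[V_x]$ are a subset of the edges incident to $v$ in $H$, since the operation of restricting to an induced subgraph can only discard edges whose other endpoint lies in $V\setminus V_x$. Consequently
\begin{equation*}
\deg_{H[V_x]}(v)\;\le\;\deg_H(v)\;=\;2,
\end{equation*}
and for vertices $v\notin V_x$ we trivially have $\deg_{H[V_x]}(v)=0\le 2$. This yields the claim.

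There is essentially no hard part to this argument; the only thing to be careful about is bookkeeping around notation, namely making sure the reader sees that $H[V_x]$ is read as \emph{the induced subgraph of the cycle $H$ on the vertex set $V_x$}, and not as some multigraph that could introduce extra incidences. Once that is clarified, the proof reduces to the one-line monotonicity of degree under taking induced subgraphs of a $2$-regular graph, which immediately pins down the structural consequence used in the next step of the paper, namely that $H[V_x]$ decomposes into a disjoint union of paths and cycles.
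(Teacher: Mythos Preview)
Your proof is correct and follows exactly the same idea as the paper's own proof, which is the single sentence ``$H[V_x]$ is a subgraph of the cycle $H$.'' You have simply spelled out in more detail why being an (induced) subgraph of a $2$-regular graph forces all degrees to be at most~$2$.
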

\begin{proof}
$H[V_x]$ is a subgraph of the cycle $H$.
\end{proof}
Let $T_x$ be the vertices contained in the bags associated with nodes in the subtree $\tau_x$ rooted at $x$ which are not in $B_x$. Remember that vertices in pseudo-edges are vertices in $B_x$. Let $X$ be the union of pseudo-edges (in $B_x$). Let $[B_x]^2$ be the set of two-element subsets of $B_x$, and let $X \subseteq [B_x]^2$. Let $S_X$ be the union of vertices involved in $X$. Then, $S_X$ is a subset of $B_x$. For any $X$, define $Y_X$ to be  the union of $S_X$ and $T_x$. Let $F_x$ be the vertices of $G$ which are introduced through $T_x$ and are not present in the bag of the parent of $x$. Define $f_x[X]$ to be the number of sets of disjoint paths (except in their endpoints where they can share a vertex) whose pseudo-edge set is $X$ (remember $S_X$ is the union of vertices involved in $X$) visiting vertices of $F_x$ exactly once (it can also visit vertices which are not in $F_x$ but we require them to visit at least vertices in $F_x$ since they are not present in the proper ancestors of $x$). Computing $f_r[\emptyset]$ gives us the number of possible Hamiltonian cycles in $G$. Now, we show how to compute the values of $f_x$ for all types of nodes.\\
\subsection{Computing $f_x[X]$}\label{sub:f}

In two rounds we will compute $f_x[X]$ efficiently. In the first round we will introduce the recursive formulas for any kind of nodes (when space usage is still exponential) and in the second round we will explain how to compute the zeta transformed values (in the next section, when the space usage drops to polynomial):

\begin{itemize}
\item \textbf{Leaf node}: Assume $x$ is a leaf node.
\begin{equation}
f_x[X] = 
\begin{cases}
1 & \text{if } X = \emptyset,\\
0 & \text{otherwise}.
\end{cases}
\end{equation}
Since $x$ is a leaf node, there is no path through $T_x$, so for all non-empty sets $X$ of pseudo-edges, $f_x[X]$ is zero and for $X = \emptyset$, there is only one set of paths, which is empty.

\item \textbf{Forget node}: Assume $x$ is a forget node (forgetting vertex $v$)  with a child $c$, where $B_x = B_c \setminus \{v\}$. Any pseudo-edge $\langle u, w\rangle \in X \subseteq [B_x]^2$ can define a path starting from $u$, going to $v$ possibly through $T_c$ and then going to $w$ possibly through $T_c$. Here, either or both pieces of the path (from $u$ to $v$, and/or from $v$ to $w$) can consist of single edges.  
\begin{SCfigure}[0.7][h]\label{fig:forget}
\includegraphics[scale=0.5]{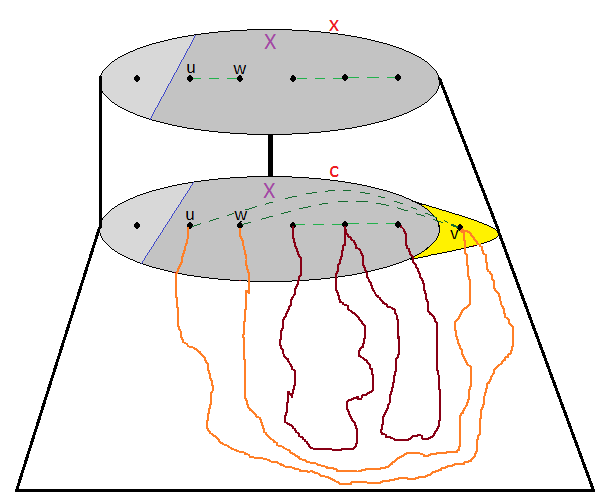}
\caption{Forget node $x$ forgetting vertex $v$ with the child $c$.}

\end{SCfigure}
\begin{equation}\label{eq:forget1}
f_x[X] = \sum_{\langle u, w\rangle \in X} \; \sum_{Q\subseteq \{\langle u, v\rangle, \langle v, w\rangle\}}d_Q \; f_c[X\setminus \{\langle u, w\rangle\}\cup Q],
\end{equation}
where $d_Q = \begin{cases}
1 & \text{ if } \langle u, v \rangle \in Q \cup E \text{ and } \langle v, w \rangle \in Q \cup E \\
0 & \text{ otherwise}.
\end{cases}$

\item \textbf{Introduce vertex node}:  Assume $x$ is an introduce vertex node (introducing vertex $v$)  with a child $c$, where $B_x = B_c \cup \{v\}$. The vertex $v$ cannot be an endpoint of a pseudo-edge because paths have length at least two.
\begin{equation}\label{eq:iv1}
f_x[X] = 
\begin{cases}
f_c[X] & \text{if } v \notin S_X,\\
0 & \text{otherwise}.
\end{cases}
\end{equation}

\item \textbf{Join node}:  Assume $x$ is a join node  with two children $c_1$ and $c_2$, where $B_x = B_{c_1}  = B_{c_2}$. For any given $X \subseteq B_x \times B_x$, $X$ can be partitioned in two sets $X'$ and $X\setminus X'$ and each of them can be the set of pseudo-edges for one of the children. 

\begin{SCfigure}[0.5][h]\label{fig:join}
\includegraphics[scale=0.4]{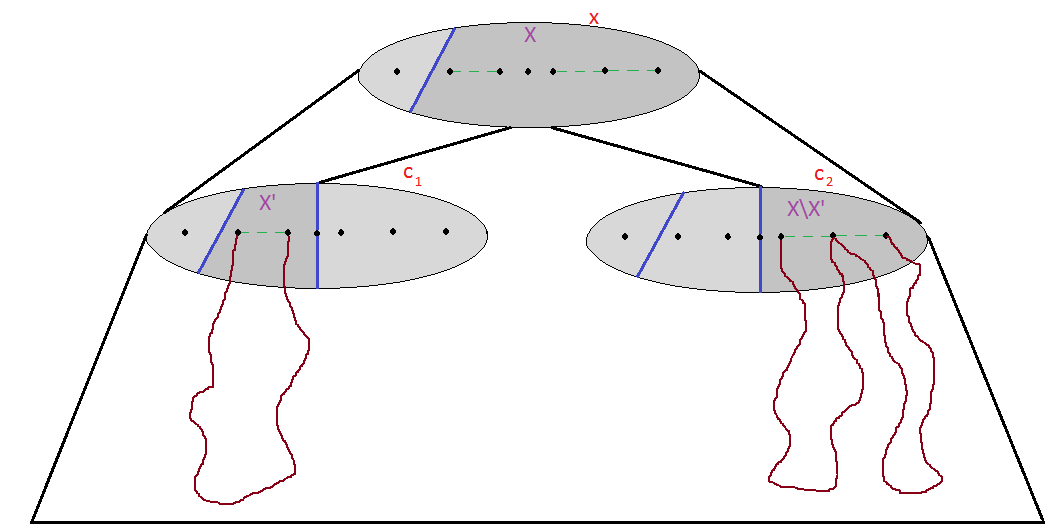}
\caption{Join node $x$ with two children $c_1$ and $c_2$.}
\end{SCfigure}
The number of such paths associated with $X$ through $T_x$ is equal to the sum of the products of the number of corresponding paths associated with $X'$ and $X\setminus X'$ through $T_{c_1}$ and $T_{c_2}$ respectively.
\begin{equation}\label{eq:join1}
f_x[X] = \sum_{X' \subseteq X} f_{c_1}[X'] f_{c_2}[X \setminus X'] = (f_{c_1} *_R f_{c_2})[X].
\end{equation}
Here we get subset convolution and we have to convert it to union product to be able to use zeta transfrom. We do this conversion in the next subsection.

%\item \textbf{Introduce edge node}: To handle the an introduce edge node $x$ (introducing edge $e$ ) with a child $c$ in the original tree decomposition easier, we add an \emph{auxiliary leaf} node as a child of $x$ named $c_2$ such that they have the same bag (i.e. $B_{c_2} = B_x$) and the edge $e$ is introduced in $c_2$. With this setting, introduce edge nodes are a join node with two children. Auxiliary leaf nodes are special leaf nodes where they do not have empty bag. Now we can treat the introduce edge nodes like join nodes.
%\item \textbf{Auxiliary leaf node}: Assume $x$ is an auxiliary leaf node which has only one edge $e = \{v, u\}$. Then
%\begin{equation}
%f_x[X] = 
%\begin{cases}
%1 & \text{if } X = \langle v, u \rangle,\\
%0 & \text{otherwise}.
%\end{cases}
%\end{equation}
\end{itemize}

\subsection{Computing $\zeta f_x[X]$}\label{sub:zetaf}
In this subsection (the second round of computation), first we compute the relaxations of $f_x[X]$ for all kinds of nodes and then we apply the zeta transform to the relaxations. In the following section let $\{f^i_x\}_{0\leq i \leq k}$ be a relaxation of $f_x$.
\begin{itemize}
\item \textbf{Leaf node}: Assume $x$ is a leaf node. Since $f_x[\emptyset] = 1$ and for any $X \neq \emptyset: f_x[X] = 0$, we can choose $f_x^i[X] = f_x[X]$ for all $i$ and $X$. Then
\begin{equation}
(\zeta f_x^{i})[X] = 1, \text{ for all } i \text{ and } X.
\end{equation}
\item \textbf{Forget node}: Assume $x$ is a forget node (forgetting vertex $v$)  with a child $c$, where $B_x = B_c \setminus \{v\}$. Thus,

\begin{equation}
f_x^i[X] = \sum_{\langle u, w\rangle \in X} \sum_{Q\subseteq \{\langle u, v\rangle, \langle v, w\rangle\}}d_Q \, f_c^{i'}[X\setminus \{\langle u, w\rangle\}\cup Q],
\end{equation}
where $i'(Q) = i-1+|Q|$, 
i.e., $i'(Q)$ is the number of pseudo-edges in $X\setminus \{\langle u, w\rangle\}\cup Q$.
Now we apply zeta transform:
\begin{eqnarray} 
\label{eq:forget2}
\lefteqn{(\zeta f_x^i)[X] =   \sum_{Y \subseteq X} f_x^{i}[Y]} \nonumber \\
& = &  \sum_{Y \subseteq X}\: \sum_{\langle u, w\rangle \in Y} \: \sum_{Q\subseteq \{\langle u, v\rangle, \langle v, w\rangle\}}d_Q \, f_c^{i'(Q)}[Y\setminus \{\langle u, w\rangle\}\cup Q]  \nonumber \\
%%new
& = & \sum_{\langle u, w \rangle \in X} \: \sum_{Q\subseteq \{\langle u, v\rangle, \langle v, w\rangle\}} \: \sum_{\{\langle u, w\rangle\} \subseteq Y \subseteq X} d_Q \,  f_c^{i'(Q)} [Y\setminus \{\langle u, w\rangle\}\cup Q]  \nonumber \\
& = & \sum_{\langle u, w \rangle \in X} \: \sum_{Q\subseteq \{\langle u, v\rangle, \langle v, w\rangle\}} \: d_Q \, \sum_{Y \subseteq (X \setminus \{\langle u, w\rangle\})}   f_c^{i'(Q)}[Y \cup Q] 
%% deleted
%& = & \sum_{\langle u, w \rangle \in X} \: \sum_{Q\subseteq \{\langle u, v\rangle, \langle v, w\rangle\}} d_Q \, (\zeta f_c^{i-1+|Q|})[X \setminus \{\langle u, w\rangle\} \cup Q].
\end{eqnarray}
We now express $E_Q =  \sum_{Y \subseteq (X \setminus \{\langle u, w \rangle\})}  f_c^{i'(Q)}[Y \cup Q]$ by $\zeta$-transforms, depending on the size of $Q$. We use the abbreviation $X' = X \setminus \{\langle u, w\rangle\}$ \\

\noindent
If $Q=\emptyset$, then
\begin{eqnarray} 
E_Q = \sum_{Y \subseteq X'}  f_c^{i-1}[Y] 
= \zeta f^{i-1}[X'].
\end{eqnarray}
If $Q = \{\langle u, v \rangle\}$ or $Q =  \{\langle v, w \rangle\}$, then
\begin{eqnarray}
 E_Q = \sum_{Y \subseteq X'}  f_c^{i}[Y \cup Q] 
= \zeta f_c^{i}[X' \cup Q]
	- \zeta f_c^{i}[X']. 
\end{eqnarray}
If $Q = \{\langle u, v\rangle, \langle v, w\rangle\}$, then
\begin{eqnarray}
E_Q & = & \sum_{Y \subseteq X'}  f_c^{i+1}[Y \cup Q] \nonumber \\
	& = & \zeta f_c^{i+1}[X' \cup Q]
	- \zeta f_c^{i+1}[X' \cup \{\langle u, v \rangle\}] \nonumber \\
	&	& \mbox{} - \zeta f_c^{i+1}[X' \cup \{\langle v, w \rangle\}]
	+ \zeta f_c^{i+1}[X'].
\end{eqnarray}

With these sums computed, we can now express $\zeta f_x^i)[X]$ more concisely.
\begin{eqnarray} \label{eq:forget_com}
(\zeta f_x^i)[X] & = & \sum_{\langle u, w \rangle \in X} \: 
	((d_{\emptyset} - d_{\{\langle u, v \rangle\}} - d_{\{\langle v, w \rangle\}} 
	+ d_{\{\langle u, v \rangle\}, \{\langle v, w \rangle\}}) \, \zeta f_c^{i-1}[X']  \nonumber \\
& &	+ d_{\{\langle u, v \rangle\}} \, \zeta f_c^i[X' \cup \{\langle u, v \rangle\}] 
	+ d_{\{\langle v, w \rangle\}} \, \zeta f_c^i[X' \cup \{\langle v, w \rangle\}] \nonumber \\
& &	+ d_{\{\langle u, v\rangle, \langle v, w\rangle\}} \, \zeta f_c^{i+1}[X' \cup \{\langle u, v\rangle, \langle v, w\rangle\}]).
\end{eqnarray}

Note that $d_{\emptyset}=1$ if $\{\langle u, v \rangle, \langle v, w \rangle\} \subseteq E$, $d_{\{\langle u, v \rangle\}}=1$ if $\langle v, w \rangle \in E$, $d_{\{\langle v,w \rangle\}}=1$ if $\langle u,v \rangle \in E$, and $d_{\{\langle u, v \rangle, \langle v, w \rangle\}}$ is always 1, while otherwise $d_Q =0$. This implies that  for every $\langle u, w \rangle$ in the previous equation at least one of the 4 coefficients is 0.
Therefore, in each forget node, we have at most a $3|X|$ fold branching.

\item \textbf{Introduce vertex node}:  Assume $x$ is an introduce vertex node (introducing vertex $v$)  with a child $c$, where $B_x = B_c \cup \{v\}$. Let $X_v$ be the set of pseudo-edges having $v$ as one their endpoints. Therefore,
\begin{equation}
f_x^i[X] = 
\begin{cases}
f_c^i[X] & \text{if } v \notin S_X,\\
0 & \text{otherwise}.
\end{cases}
\end{equation}
\begin{equation}\label{eq:iv2}
(\zeta f_x^i)[X]  = \sum_{Y\subseteq X}f_x^i[Y] = \sum_{Y \subseteq (X \setminus X_v)}f_c^i[Y] = (\zeta f_c^{i})[X \setminus X_v].
\end{equation}

\item \textbf{Join node}:  Assume $x$ is a join node   with two children $c_1$ and $c_2$, where $B_x = B_{c_1}  = B_{c_2}$. To compute $f_x$ on a join node, we can use Equation \ref{eq:join1}. In order to convert the subset convolution operation to pointwise multiplication, first we need to convert the subset convolution to union product and then we are able to use Theorem \ref{thm:union_zeta}. To convert subset convolution to union product we introduce a relaxation of $f_x$. Let $f^i_x[X] = \sum_{j=0}^{i} (f_{c_1}^{j} *_u f_{c_2}^{i-j})[X]$ be a relaxation of $f_x$. 
\begin{equation}\label{eq:join2}
(\zeta f^i_x)[X] = \sum_{j=0}^{i}(\zeta f^j_{c_1})[X] \cdot (\zeta f^{i-j}_{c_2})[X], \text{ for } 0 \leq i \leq w,
\end{equation}
where $w$ is the treewidth of $\tau$.

%\item \textbf{Introduce edge node}: As we mentioned in the previous section, it can be converted to a join node with an auxiliary leaf node as its one of children.

%\item \textbf{Auxiliary leaf node}: Assume $x$ is an auxiliary leaf node which has only one edge $e = \{v, u\}$. Then
%\begin{equation}
%(\zeta f_x)[X] = 
%\begin{cases}
%1 & \text{if } X = \langle v, u \rangle,\\
%0 & \text{otherwise}.
%\end{cases}
%\end{equation}

\end{itemize}

To summerize, we present the following algorithm for the Hamiltonian Cycle problem where a tree decomposition $\tau$ is given.

\begin{algorithm}
    \SetKwInOut{Input}{Input}
    \SetKwInOut{Output}{Output}
    \SetKwInOut{Return}{return}
    \SetKwInOut{EProcedure}{end procedure}
    \SetKwInOut{Procedure}{procedure}

    \Input{A nice tree decompoisition $\tau$ with root $r$}
    \Return{$(\zeta f)(r, \emptyset, 0)$ \quad \quad //$f_r^{0}[\emptyset]$.}
    \textbf{procedure} $(\zeta f)(x, X, i)$: //$(\zeta f)(x, X, i)$ is the representation of $(\zeta f_x^i)[X]$\\
    \qquad \textbf{if} $x$ is a leaf node:\\
    \qquad \qquad \textbf{return} $1$.\\
    \qquad \textbf{if} $x$ is a forget node: //forgetting $v$\\
    \qquad \qquad \textbf{return} the value according to Equation \ref{eq:forget_com}. \\
    \qquad \textbf{if} $x$ is an introduce vertex node: //introducing $v$\\
    \qquad \qquad \textbf{return} $(\zeta f)(c, X \setminus X_v, i)$.\\
    \qquad \textbf{if} $x$ is a join node:\\
    \qquad \qquad \textbf{return} $ \sum_{j=0}^{i}(\zeta f)(c_1, X, j) \cdot (\zeta f)(c_2, X, i-j)$.\\
    \textbf{end procedure}
     
    \caption{Counting the total number of the possible Hamiltonian cycles in a graph given by a nice tree decomposition}\label{ham}
\end{algorithm}

\begin{theorem}\label{thm:t1}
Given a graph $G = (V, E)$ and a tree decomposition $\tau$ of $G$, we can compute the total number of the Hamiltonian Cycles of $G$ in time $\mathcal{O}((4w)^dnM(n\log{n}))$ and in space $\mathcal{O}(wdn\log{n})$ by using Algorithm \ref{ham}, where $w$ and $d$ are the width and the depth of $\tau$ respectively, and $M(n)$ is the time complexity to multiply two numbers which can be encoded by at most $n$ bits. 
\end{theorem}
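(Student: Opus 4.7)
The plan is to establish three things about Algorithm~\ref{ham}: (a) it returns the correct count, (b) the running time bound, and (c) the space bound.

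For (a), I would observe that the output $(\zeta f)(r,\emptyset,0)$ represents $\zeta f^0_r[\emptyset]$. Since the only subset of $\emptyset$ is $\emptyset$ itself, $\zeta g[\emptyset] = g[\emptyset]$ for any $g$, and combined with $f^0_r[\emptyset] = f_r[\emptyset]$ (relaxation at $i=|X|=0$), the returned value equals $f_r[\emptyset]$. Assuming the root bag is empty, $f_r[\emptyset]$ counts Hamiltonian cycles of $G$ by the construction in Section~\ref{sub:f}. The four case-branches of the algorithm match, respectively, the leaf-node formula, Equation~\ref{eq:forget_com}, Equation~\ref{eq:iv2}, and Equation~\ref{eq:join2}, which are derived directly from the DP recursions of Section~\ref{sub:f} by relaxation and zeta transformation. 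Theorem~\ref{thm:union_zeta} does the work at the join node, since the relaxation first turns the subset convolution of Equation~\ref{eq:join1} into a union product whose zeta transform is the pointwise product of zetas.

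For (b), I would count recursive calls and multiply by per-call arithmetic cost. The branching factor at each node type is $O(w)$: leaves and introduce-vertex nodes do not branch; a forget node invokes its child at most $4|X|\le 4(w+1)$ times via the four $Q$-summands of Equation~\ref{eq:forget_com} (reducing to $3|X|$ under the coefficient-coincidence noted there); and a join node invokes its children at most $2(i+1)\le 2(w+1)$ times. Along any root-to-leaf path of $\tau$, each vertex of $G$ is forgotten at most once because its occurrences in $\tau$ form a connected subtree, so at most $d$ forget nodes lie on that path. Nesting the branching through these levels bounds the recursion tree by $(4w)^d$ per path, and summing over the $O(n)$ nodes of $\tau$ gives $O(n(4w)^d)$ calls overall. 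Each call performs $O(1)$ arithmetic operations on integers of magnitude at most $n!$, i.e. of $O(n\log n)$ bits, at cost $M(n\log n)$ each; the total is $O((4w)^d\,n\,M(n\log n))$.

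For (c), depth-first recursion keeps only a single root-to-leaf path of the recursion tree alive at a time, so stack depth is bounded by the depth of $\tau$, which after a preprocessing/balancing step is $O(d)$. Each frame stores $X$ (at most $w+1$ pseudo-edges by the degree-$2$ constraint on endpoints, so $O(w\log n)$ bits), the index $i$, and intermediate integer values. The worst case is a join node, which holds one child's value while the other is being computed and maintains a running sum; across the $O(w)$ iterations of $j$ this amounts to $O(w\,n\log n)$ bits per frame, where each integer is $O(n\log n)$ bits. Multiplying by the depth gives total space $O(wdn\log n)$. The main obstacle I expect is tightness: ensuring that the effective recursion depth is truly $\Theta(d)$ rather than the (potentially much larger) node-depth of $\tau$, and that the forget-node branching is only $O(w)$ rather than $O(w^2)$. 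Both hinge on structural facts about nice tree decompositions together with the cycle's degree-$2$ constraint, which forces $|X|\le w+1$ and keeps the base of the exponential at $4w$.
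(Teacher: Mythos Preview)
Your proposal is correct and follows essentially the same line as the paper: correctness by appeal to the derivations in Sections~\ref{sub:f}--\ref{sub:zetaf}, running time from the at most $4w$-way branching at the at most $d$ forget nodes on each root-to-leaf path times $O(n)$ paths and $M(n\log n)$-cost arithmetic on $O(n\log n)$-bit integers, and space from a depth-first stack of $O(d)$ frames each carrying $O(w)$ pseudo-edges and $O(n\log n)$-bit values. If anything you are more careful than the paper's proof, which simply asserts that branching happens only at forget nodes and does not address the join-node $\sum_j$ branching or the gap between $d$ and the node-depth of $\tau$ that you explicitly flag as obstacles.
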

Proof can be found in appendix.

%\begin{lstlisting}[caption={Counting number of Hamiltonian Cycles},label=list:8-6,captionpos=t,float,abovecaptionskip=-\medskipamount]
%for i:=maxint to 0 do 
%begin 
%   j:=square(root(i));
%end;
%\end{lstlisting}

\section{The traveling Salesman problem}
In the previous section, we showed how to count the total number of possible Hamiltonian cycles of a given graph. In this section, we discuss a harder problem. We know that the Hamiltonian Cycle problem is reducible to the traveling Salesman problem by setting all cost of edges to be 1. We could have just explained how to solve Traveling Salesman problem but we chose to first explain the easier one to help understanding the process. Now, we have most of the notations. First, we recap the formal definition of the traveling Salesman problem.\\

\begin{definition}\textbf{Traveling Salesman.} Given an undirected graph $G = (V, E)$ with weighted edges, where for all $\{u, v\} \in E$, $c_{u,v}$ is the weight (= cost) of $\{u, v\}$ (nonnegative integer). In the traveling Salesman problem we are asked to find a cycle (if there is any) that visits all of the vertices exactly once (it is a Hamiltonian cycle) with minimum cost. 
\end{definition}

%\begin{definition}\label{def:pathcost}
%For a path $P = v_1v_2v_3\dots v_k$ of a graph $G$ with weighted edges, \emph{the cost of $P$} is $c_{_P} = \sum_{i = 1}^{k-1}c_{v_{i}, v_{i+1}}$.  
%\end{definition}

As mentioned, the output should be a minimum cost Hamiltonian cycle. We have the same notations as before. Thus, we are ready to explain our algorithmic framework:\\
The difference between counting the number of Hamiltonian cycles and finding the minimum cost of a Hamiltonian cycle (answer to the Traveling salesman problem), is that we should work with lengths instead of the numbers of solutions. In order to solve this problem, we work with the ring of polynomials $\mathbb{Z}[x]$, where $x$ is a variable. Our algorithm computes the polynomial $P_x(y)[X] = \sum_{i = 0}^{\mathcal{W}} a_i^x[X]y^i$, where $a_i^x[X]$ is the number of solutions in $\tau_x$ associated with $X$ with cost $i$, and $\mathcal{W}$ is the sum of the weights of all edges. The edge lengths have to be nonnegative integers as mentioned above.

\subsection{Computing the Hamiltonian Cycles of All Costs}
To find the answer to the traveling salesman problem, we have find the first nonzero coefficient of $P_r(x)$, where $r$ is the root of the given nice tree decomposition. As we did for the Hamiltonian cycle problem, we show how compute this polynomial recursively for all kinds of node.

\begin{itemize}
\item \textbf{Leaf node}: Assume $x$ is a leaf node. Since we require leaf nodes to have empty bags, then
\begin{equation}
P_x(y)[X] = 
\begin{cases}
1 & \text{if } X = \emptyset,\\
0 & \text{otherwise}.
\end{cases}
\end{equation}

\item \textbf{Forget node}: Assume $x$ is a forget node (forgetting vertex $v$)  with a child $c$, where $B_x = B_c \setminus \{v\}$.
\begin{equation}\label{eq:forgetTS}
P_x(y)[X] = \sum_{\langle u, w\rangle \in X} \sum_{Q\subseteq \{\langle u, v\rangle, \langle v, w\rangle\}}d_Q(P_c(y))[X\setminus \{\langle u, w\rangle\}\cup Q],
\end{equation}
where $d_Q = \begin{cases}
1 & \text{ if } \langle u, v \rangle \in Q \cup E \text{ and } \langle v, w \rangle \in Q \cup E \\
0 & \text{ otherwise}.
\end{cases}$

\item \textbf{Introduce vertex node}:  Assume $x$ is an introduce vertex node (introducing vertex $v$)  with a child $c$, where $B_x = B_c \cup \{v\}$.
\begin{equation}\label{eq:iv1}
P_x(y)[X] =  
\begin{cases}
P_c(y)[X] & \text{if } v \notin S_X,\\
0 & \text{otherwise}.
\end{cases}
\end{equation}

\item \textbf{Join node}:  Assume $x$ is a join node   with two children $c_1$ and $c_2$, where $B_x = B_{c_1}  = B_{c_2}$.
\begin{equation}\label{eq:join1a}
f_x[X] = \sum_{X' \subseteq X}P_{c_1}(y)[X']  P_{c_2}(y)[X \setminus X'].
\end{equation}

We skip the zeta transform part because it is similar to the Hamiltonian Cycle case.
\end{itemize}

\begin{theorem}\label{thm:t2}
Given a graph $G = (V, E)$ and a tree decomposition $\tau$ of $G$, we can solve the Traveling Salesman problem for $G$ in time $\mathcal{O}(4w)^d \cdot poly(n)$ and in space $\mathcal{O}(\mathcal{W}dwn\log{n})$ where $\mathcal{W}$ is the sum of the weights, $w$ and $d$ are the width and the depth of the tree decomposition $\tau$ respectively. 
\end{theorem}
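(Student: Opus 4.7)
The plan is to follow the template of Theorem \ref{thm:t1}, lifting all counts from $\mathbb{Z}$ to the polynomial ring $\mathbb{Z}[y]$ and checking that the cost-tracking variable $y$ introduces only a factor $\mathcal{W}+1$ in space and a $\mathrm{poly}(n)$ factor in time per arithmetic operation.

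First, I would establish correctness by structural induction on the nodes of the nice tree decomposition $\tau$. The invariant is that
\[
P_x(y)[X] \;=\; \sum_{i \geq 0} a_i^x[X]\, y^i,
\]
where $a_i^x[X]$ is the number of vertex-disjoint collections of paths in $\tau_x$ with pseudo-edge set $X$, covering $F_x$ and having total edge weight exactly $i$. The leaf and introduce-vertex cases are immediate from the corresponding ones in Section~\ref{sub:f}. For the forget node, the contribution of a pseudo-edge $\langle u,w\rangle$ completed through $v$ using a graph edge $e$ in $\{uv,vw\}\subseteq E$ is multiplied by $y^{c_e}$ (absorbed into the $d_Q$-weighting in Equation~\ref{eq:forgetTS}); the coefficient extraction then matches $a_i^x[X]$ by convolution of the weight-indexed counts from the child. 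For the join node, Equation~\ref{eq:join1a} becomes a subset convolution in $\mathbb{Z}[y][2^{B_x}]$, whose correctness follows from partitioning each path system of weight $i$ between the two subtrees. The answer to TSP is then the least index $i$ with $a_i^r[\emptyset]\neq 0$ in $P_r(y)[\emptyset]$.

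Next, I would invoke Theorem~\ref{thm:union_zeta} verbatim — it holds for any commutative ring, in particular $\mathbb{Z}[y]$ — so after passing to relaxations $P_x^i(y)[X]$ and applying $\zeta$ coefficient-wise (in $X$), the join-node subset convolution becomes pointwise polynomial multiplication, and the forget/introduce formulas from Section~\ref{sub:zetaf} carry over unchanged, with every occurrence of $\zeta f_c^{i'}[\,\cdot\,]$ replaced by the polynomial $\zeta P_c^{i'}(y)[\,\cdot\,]$. This yields an algorithm identical in control flow to Algorithm~\ref{ham}, with scalar values replaced by polynomials of degree at most $\mathcal{W}$.

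For the time bound, the recursion tree branches identically to the Hamiltonian Cycle algorithm, giving the $(4w)^d$ factor of Theorem~\ref{thm:t1}. At each call we perform a constant number of additions and at most $w+1$ multiplications of polynomials of degree $\leq \mathcal{W}$ whose coefficients are bounded by the total number of Hamiltonian cycles in $G$, i.e.\ by $n!$ and so by $O(n\log n)$ bits. Each such polynomial operation costs $\mathrm{poly}(n,\mathcal{W}) = \mathrm{poly}(n)$ (absorbing $\mathcal{W}$ into $\mathrm{poly}(n)$ under the usual assumption that weights are given in unary; otherwise one keeps an explicit $\mathrm{poly}(\mathcal{W})$ factor and the bound reads $\mathcal{O}((4w)^d\,\mathrm{poly}(n,\mathcal{W}))$). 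For the space bound, the recursion stack has depth $O(d)$; at each frame we keep $O(w)$ intermediate polynomials, each consisting of $\mathcal{W}+1$ coefficients of $O(n\log n)$ bits, for a total of $\mathcal{O}(\mathcal{W} d w n\log n)$ bits, which is exactly the claimed bound.

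The main obstacle is the time accounting at join nodes, where one is tempted to incur a full $\mathcal{O}(\mathcal{W}^2)$ term per multiplication; this must be folded into the $\mathrm{poly}(n)$ factor using the standard $M(\cdot)$ integer multiplication subroutine applied to the polynomial coefficients, as in Theorem~\ref{thm:t1}. A secondary subtlety is that the edge-cost bookkeeping in the forget-node formula (Equation~\ref{eq:forgetTS}) must correctly attribute $y^{c_{uv}}$ and $y^{c_{vw}}$ factors only when the corresponding edges are in $E$ rather than being pseudo-edges — this is where the case analysis of the $d_Q$ coefficients, inherited from Section~\ref{sub:zetaf}, must be reexamined in the weighted setting to confirm no cost is double-counted nor missed.
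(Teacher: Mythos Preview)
Your proposal is correct and follows essentially the same approach as the paper: lift the Hamiltonian-cycle recursion from $\mathbb{Z}$ to $\mathbb{Z}[y]$, reuse the zeta-transform machinery verbatim (valid over any commutative ring), and observe that the only change in the complexity analysis is the extra $\mathcal{W}$ factor in space from storing polynomials of degree at most $\mathcal{W}$. Your write-up is in fact considerably more detailed than the paper's own proof, which simply defers to the Hamiltonian-cycle analysis; in particular, your remark about the $y^{c_{uv}}$, $y^{c_{vw}}$ bookkeeping at forget nodes is well-placed, since the paper's Equation~\eqref{eq:forgetTS} as written reuses the unweighted $d_Q$ and does not explicitly display these factors.
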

You can find the proof in appendix.

\subsection{Conclusion}
In this work, we solved Hamiltonian Cycle and Traveling Salesman problems with polynomial space complexity where the running time is polynomial in size of the given graph and exponential in tree-depth. Our algorithms for both problems rely on modifying a DP approach such that instead of storing all possible intermediate values, we keep track of zeta transformed values which was first introduced in \cite{lokshtanov2010saving}, and then in \cite{furer2017space} for dynamic underlying sets.

%\item \textbf{Introduce edge node}: To handle the an introduce edge node $x$ (introducing edge $e$ ) with a child $c$ in the original tree decomposition easier, we add an \emph{auxiliary leaf} node as a child of $x$ named $c_2$ such that they have the same bag (i.e. $B_{c_2} = B_x$) and the edge $e$ is introduced in $c_2$. With this setting, introduce edge nodes are a join node with two children. Auxiliary leaf nodes are special leaf nodes where they do not have empty bag. Now we can treat the introduce edge nodes like join nodes.

%\item \textbf{Auxiliary leaf node}: Assume $x$ is an auxiliary leaf node which has only one edge $e = \{v, u\}$. Then
%\begin{equation}
%f_x[X] = 
%\begin{cases}
%1 & \text{if } X = \langle v, u \rangle,\\
%0 & \text{otherwise}.
%\end{cases}
%\end{equation}

\bibliographystyle{amsplain}
\bibliography{HamCyclMBelbasiMFurer}
\newpage

\appendix
\section{Appendix}
Here is the proof of Theorem \ref{thm:t1}.
\begin{proof}
The correctness of the algorithm is shown in Section \ref{sub:f} and Section \ref{sub:zetaf}. Now, we show the running time and the space complexity of our algorithm.\\
\underline{Running time}: The only case that branching happens in where we are handling a forget node. We have at most $w$ branches in forget nodes since the number of pseudo-edges in each bag is bounded by $w+1$ because of Lemma \ref{lem:pe}. Furthermore, based on the formula for forget node, the two unordered pairs of $(u,v)$ and $(v,w)$ can contribute as an edge or a pseudo-edge (four possible cases). On the other hand the number of forget nodes in a path from the root to one leaf is bounded by the $d$ (the depth of tree decomposition $\tau$). Also, the number of vertices is $n$ and we work with numbers of size at most $n!$ (number of paths) which can be represented by at most $n\log{n}$ bits. We handle multiplication of these numbers which happens in $\mathcal{O}(M(n\log{n}))$. All being said, the total running time is $\mathcal{O}((4w)^dnM(n\log{n}))$.\\
\underline{Space complexity}: We keep the results for each strand at a time. The number of nodes in a path from the root to a leaf is bounded by the depth of the tree decomposition ($d$). Along the path, we keep track of bags (of size $w$) and number of disjoint paths (at most $n!$ paths exist which can be shown by at most $n\log{n}$ bits). Therefore, the total space complexity is $\mathcal{O}(wdn\log{n})$.
\end{proof}
And here is the proof of Theorem \ref{thm:t2}.
\begin{proof}
The correctness of the algorithm is shown in Section \ref{sub:f} and Section \ref{sub:zetaf}. Now, we show the running time and the space complexity of our algorithm.\\
\underline{Running time}: The running time analysis is very similar to the analysis done for Hamiltonian Cycle.\\
\underline{Space complexity}: The space complexity analysis is also similar to the analysis done for Hamiltonian Cycle except here we have to keep track of the sum of the wights which at most is $\mathcal{W}$. So there is a factor of $\mathcal{W}$ in the space complexity here.
\end{proof}

%%\printbibliography
\end{document}